\newtheorem{theorem}{Theorem}[section]
\newtheorem{lemma}[theorem]{Lemma}
\newtheorem{definition}[theorem]{Definition}
\newtheorem{proposition}[theorem]{Proposition}
\newtheorem{corollary}[theorem]{Corollary}
\newtheorem{remark}[theorem]{Remark}
\newenvironment{proof}[1][Proof]{\begin{trivlist}
		\item[\hskip \labelsep {\bfseries #1}]}{\end{trivlist}}
\newcommand{\se}[1]{\section{#1}}
\newcommand{\sse}[1]{\subsection{#1}}
\newcommand{\be}{\begin{equation}}
	\newcommand{\ee}{\end{equation}}
\newcommand{\bea}{\begin{eqnarray*}}
	\newcommand{\eea}{\end{eqnarray*}}
\newcommand{\bean}{\begin{eqnarray}}
	\newcommand{\eean}{\end{eqnarray}}
\newcommand{\ben}{\begin{enumerate}}
	\newcommand{\een}{\end{enumerate}}
\newcommand{\bi}{\begin{itemize}}
	\newcommand{\ei}{\end{itemize}}
\newcommand{\brem}{\begin{remark}}
	\newcommand{\erem}{\end{remark}}
\newcommand{\bcen}{\begin{center}}
	\newcommand{\ecen}{\end{center}}
\newcommand{\bsv}{\begin{semiverbatim}}
	\newcommand{\esv}{\end{semiverbatim}}
\newcommand{\bt}{\begin{theorem}}
	\newcommand{\et}{\end{theorem}}
\newcommand{\bl}{\begin{lemma}}
	\newcommand{\el}{\end{lemma}}
\newcommand{\bd}{\begin{definition}}
	\newcommand{\ed}{\end{definition}}
\newcommand{\bc}{\begin{corollary}}
	\newcommand{\ec}{\end{corollary}}
\newcommand{\bp}{\begin{proposition}}
	\newcommand{\ep}{\end{proposition}}
\newcommand{\bbN}{ \mathbb{N}}
\newcommand*{\addFileDependency}[1]{
	\typeout{(#1)}
	\@addtofilelist{#1}
	\IfFileExists{#1}{}{\typeout{No file #1.}}
}
\title{Seroprevalence of SARS-CoV-2 antibodies in South Korea}
\author[1]{Kwangmin Lee}
\author[2]{Seongil Jo}
\author[1]{Jaeyong Lee\footnote{Corresponding author: leejyc@gmail.com}}
\affil[1]{Department of Statistics, Seoul National University}
\affil[2]{Department of Statistics, Inha University}
\begin{document}

	\maketitle

	\begin{abstract}
		In $2020$, Korea Disease Control and Prevention Agency reported three rounds of surveys on seroprevalence of severe acute respiratory syndrome coronavirus 2 (SARS-CoV-2) antibodies in South Korea. We analyze the seroprevalence surveys using a Bayesian method with an informative prior distribution on the seroprevalence parameter, and  the sensitivity and specificity of the diagnostic test.  We construct the informative prior  using the posterior distribution obtained from the clinical evaluation data based on the plaque reduction neutralization test. 
The constraint of the seroprevalence parameter induced from the known confirmed  cornonavirus 2019 cases can be imposed naturally  in the proposed Bayesian model. 
We also prove that the confidence interval of the seroprevalence parameter based on the Rao's test can be the empty set, while the Bayesian method renders a reasonable interval estimator. 
        As of the $30$th of  October 2020,  the $95\%$ credible interval of the estimated SARS-CoV-2 positive population does not exceed $307,448$, approximately  $0.6\%$ of the Korean population.
	\end{abstract}
	
	\se{Introduction}
	
	In December 2019, the Chinese government reported a cluster of pneumonia patients of unknown cause in Wuhan, China. It was found that an unknown betacoronavirus causes the disease \citep{zhu2020novel}.
	The Coronaviridae Study Group (CSG) of the International Committee on Taxonomy of Viruses has named the virus as severe acute respiratory syndrome coronavirus 2 (SARS-CoV-2), due to the similarity to SARS-CoV \citep{gorbalenya2020species}.
	The World Health Organization (WHO) also has named the disease caused by SARS-CoV-2 as COVID-19, short for coronavirus disease 2019 \citep{world2020novel22}.
	As of  January 10, 2021, over $90,000,000$ people in the world are confirmed positive for COVID-19, and there are over $68,000$ confirmed cases in South Korea.

    Most statistical approaches use the number of confirmed cases to assess the spread of infectious diseases in a population. However, the number of confirmed cases does not include those that are infected but not detected. 
    A seroprevalence survey can be an alternative in this case. The seroprevalence is the number of people with antibodies to the virus in a population.
     The WHO (2020b)  \nocite{world2020population} proposes to analyze seroprevalence  surveys for the inference on the spread of  a novel coronavirus.
    Seroprevalence surveys have been conducted in many countries, and the results are collected in Serotracker, a global seroprevalence dashboard \citep{arora2020serotracker}.
    According to the recent update on December $12$, $2020$, Serotracker provides the survey results of $56$ countries based on $491$ studies.


The seroprevalence survey data can be analyzed under either the assumption that  the diagnostic tests used in the survey are $100\%$ accurate or the assumption that the tests are not $100\%$ accurate. We will term these assumptions as the {\it accuracy assumption} and the {\it inaccuracy assumption}, respectively. Under the accuracy assumption,  \cite{song2020igg} and \cite{noh2020seroprevalence} analyzed outpatient data sets  in southwestern Seoul and Daegu, respectively, and estimated the seroprevalence. 
Although the statistical models are simpler under the accuracy assumption, the estimates   can be biased unless  the assumption is met as pointed out in \cite{diggle2011estimating}.  Under the inaccuracy assumption, 
\cite{diggle2011estimating} proposed a corrected prevalence estimator  and \cite{silveira2020population} constructed a confidence interval of the seroprevalence  using a resampling method. In an analysis of a seroprevalence survey data  of southern Brazil, \cite{silveira2020population}  showed that confidence intervals can be $\{ 0 \} $, which is hardly reliable.  See Supplementary Table 2 in \cite{silveira2020population}.
In Section \ref{sec:frequentist},  we also prove that the confidence interval constructed from the Rao's test using the duality theorem \citep{bickel2015mathematical} can be the empty set. These examples show that the frequentist confidence intervals of the seroprevalence under the inaccuracy assumption can be unreliable.



In this paper, we propose a Bayesian method under the inaccuracy assumption 
and apply the proposed method to  the  seroprevalence surveys of the South Korean population conducted  in $2020$  \citep{kcdc}. 
We use the posterior distribution obtained from the Bayesian model of the clinical evaluation data \citep{kohmer2020brief} as the informative prior distribution of the sensitivity and specificity on the diagnostic test.

The rest of the paper is organized as follows. In the next section, we describe the seroprevalence surveys of SARS-CoV-2 motivating this work and the plaque reduction neutralization test for detection of SARS-CoV-2 antibodies. 
In Section \ref{sec:frequentist}, we conduct a frequentist analysis and discuss the phenomenon of empty confidence sets. In Section \ref{sec:Bayesian}, we propose a Bayesian method for the seroepidemiological survey that gives nonempty interval estimates, and analyze the seroprevalence surveys of the South Korean population using the proposed Bayesian method. We conclude the paper with a discussion section.

	\se{Seroepidemiological surveys and clinical evaluation of a serology test}
	\sse{Seroepidemiological surveys of SARS-CoV-2 in South Korea}\label{sec:survey}

	Korea Disease Control and Prevention Agency (KDCA) conducted three rounds of seroprevalence surveys of SARS-CoV-2 for South Korean population in 2020. KDCA used the sets of samples collected in the Korea National Health and Nutrition Examination Survey (KNHNES), which is a regular national survey to investigate the health and nutritional status of South Koreans since 1998 \citep{kweon14}, as the samples of the seroprevalance surveys.
	KDCA performed a serology test for SARS-CoV-2 to the residual serums, and the test results \citep{kcdc} are summarised in Table \ref{tbl:survey}. 
	In Table \ref{tbl:survey}, the periods during which the samples are collected are also given. 
	\begin{table}[!ht]
		\caption{ The result of the seroprevalence surveys in 2020 \citep{kcdc}. The column of the announcement date represents dates when KDCA reports the results of the surveys. The column of the collection period represents the periods during which the sets of samples are collected.\label{tbl:survey}}
		\centering
		\begin{tabular}{|c|c|c|c|}
			\hline
			Accouncement date & Collection period  & Number of samples & \makecell{Number of \\  test-positive samples} \\ \hline
			9th of July            & 4.21. $\sim$ 6.16.  & $1500$              & $0$                       \\ \hline
			11th of September      & 6.10. $\sim$ 8.13.  & $1440$              & $1$                       \\ \hline
			23th of November       & 8.14. $\sim$ 10.31. & $1379$              & $3$                       \\ \hline
		\end{tabular}  
	\end{table}


	\sse{Clinical evaluation of plaque reduction neutralization test for SARS-CoV-2 antibodies}
	
	When KDCA performed a serology test for SARS-CoV-2, KDCA used their in-house plaque reduction neutralization test (PRNT). In the PRNT, serum samples are tested for their neutralization capacity against SARS-CoV-2. To estimate the sensitivity and specificity of PRNT methods for SARS-CoV-2 empirically, we use a set of clinical evaluation data (Table \ref{table:frank}) which is conducted by \cite{kohmer2020brief}.
	\begin{table}[!htbp]
		\centering
			\caption{The data of clinical evaluation of the PRNT by \cite{kohmer2020brief}.
			The columns represent the true states of samples. The true state of a sample refers to whether the sample has the antibodies against SARS-CoV-2 in reality.
			\label{table:frank}}
		\begin{tabular}{|c|c|c|c|c|}
			\hline
			\multicolumn{2}{|c|}{\multirow{2}{*}{}}   & \multicolumn{3}{c|}{True state}          \\ \cline{3-5} 
			\multicolumn{2}{|c|}{}                    & Positive     & Negative     & Total            \\ \hline
			\multirow{3}{*}{\makecell{Test results \\ of the PRNT}} & Positive & $42$     & $1$     & $43$     \\ \cline{2-5} 
			& Negative & $3$     & $34$     & $37$     \\ \cline{2-5} 
			& Total    & $45$ & $35$ & $80$ \\ \hline
		\end{tabular}

	\end{table}

	\se{Maximum likelihood estimator and a confidence interval}\label{sec:frequentist}


	Under the inaccuracy assumption, we specify a statistical model for seroprevalance surveys, and present the maximum likelihood estimator and a confidence interval of the seroprevalance. We assume that the sensitivity and specificity of serology test are fixed values for the estimator and the confidence interval. 
	Note that the sensitivity and specificity are the probabilities that the positive has the positive test result and the negative has the negative test result, respectively. 
	
	We define {\it seroprevalance parameter}, $\theta$, as the proportion of those who have antibodies against SARS-CoV-2 in the population. 
	Let $ N $ be the number of samples of seroprevalence survey, $ X $ be the number of test-positive samples by serology test, and $p_+$ and $p_-$ denote the sensitivity and specificity of the serology test, respectively. We assume $X$ is generated from the binomial distribution:
	\bean
	X \sim  Binom\left(N,\theta p_+ +(1-\theta)(1-p_-)\right),\label{model:gen1}
	\eean
	where $Binom(n,p)$ denotes the binomial distribution with parameters $n \in \mathbb{N}$ and $p \in [0, 1]$.
	When $p_+$ and $p_-$ are known, the maximum likelihood estimator for $\theta$ is as follows. 
	If $1-p_-<p_+$, then 
	\bean\label{eq:mle}
	\hat{\theta}^{MLE} =  \begin{cases} 0 & \text{if } X \le N(1-p_-) \\
	1 &\text{if } X \ge Np_+\\
	\frac{X/N - (1-p_-)}{p_+ +p_- -1} & \text{if } N(1-p_-)< X < N p_+,
	\end{cases}  
	\eean
	and if $p_+ < 1-p_-$, then
	\bea
	\hat{\theta}^{MLE} =  \begin{cases} 0 & \text{if } X \ge N(1-p_-) \\
		1 &\text{if } X \le Np_+\\
		\frac{X/N - (1-p_-)}{p_+ +p_- -1} & \text{if } Np_+ <  X< N(1-p_-). 
	\end{cases}  
	\eea
	Note if the number of test-positive samples is small or large enough, the maximum likelihood estimator can be $0$ or $1$. This means that nobody or everybody in the population has antibodies against SARS-CoV-2, which is hardly reliable. 
	
	We construct a confidence interval of $\theta$ from Rao's test \citep{rao1948large} using the duality thoerem \citep{bickel2015mathematical}, and show that when $X$ is too small or large, the confidence interval can be the empty set. 
	Let $A(\theta_0) =[l_{\theta_0}, u_{\theta_0}]$ be the $100(1-\alpha)\%$ acceptance interval of the Rao's test under the null hypothesis $H_0: \theta=\theta_0$. By the duality theorem $S(X)= \{\theta_0\in[0,1]: X \in A(\theta_0) \}$ is a $100(1-\alpha)\%$ confidence interval for $\theta$.
	Theorem \ref{theorem:raotest} gives the acceptance interval, $A(\theta_0)$, and the condition that the confidence interval $S(X)$ is the empty set.  
	
	\begin{theorem}\label{theorem:raotest} Consider the model \eqref{model:gen1}. 
	\begin{enumerate}[(a)]
	\item 
    The $100(1-\alpha)\%$ acceptance region of the test
      $$H_0 : \theta = \theta_0 \text{ vs } H_1 : \theta \neq \theta_0$$
     by the Rao’s Score test is given as
    \bea
    [l_{\theta_0}, u_{\theta_0} ] = [ N\theta^*_0 - \{N\chi^2_{0.05}(1)\theta^*_0(1-\theta^*_0) \}^{1/2}, N\theta^*_0 + \{N\chi^2_{0.05}(1)\theta^*_0(1-\theta^*_0) \}^{1/2} ],
    \eea
    where $\theta^*_0 = \theta_0p_+ + (1-\theta_0)(1-p_-)$ and $\chi^2_{\alpha}(1)$ is $(1-\alpha)100\%$ quantile of chi-square distribution with $1$ degree of freedom.
    \item If
	\bean
	X<\inf_{\theta_0\in[0,1]}l_{\theta_0} \text{ or } X>\sup_{\theta_0\in[0,1]}u_{\theta_0},\label{formula:freq}
	\eean
	the $100(1-\alpha)\%$ confidence interval $S(X)= \{\theta_0\in[0,1]: X \in A(\theta_0) \}$ is the empty set. 
    \end{enumerate}
	\end{theorem}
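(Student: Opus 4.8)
The plan is to handle part~(a) by a direct computation of Rao's score statistic for the binomial model~\eqref{model:gen1} under the affine reparametrization $\theta\mapsto\theta^*:=\theta p_+ + (1-\theta)(1-p_-)$, and part~(b) as an immediate consequence of the duality characterization of $S(X)$ together with part~(a).

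For part~(a), I would begin from the log-likelihood $\ell(\theta)=X\log\theta^*+(N-X)\log(1-\theta^*)$ and observe that $\theta\mapsto\theta^*$ has constant derivative $d\theta^*/d\theta=p_++p_--1=:c$. The chain rule then gives the score $U(\theta)=\ell'(\theta)=c\,(X-N\theta^*)/\{\theta^*(1-\theta^*)\}$, and because $U(\theta)$ is linear in $X$ the Fisher information is $I(\theta)=\mathrm{Var}\,U(\theta)=Nc^2/\{\theta^*(1-\theta^*)\}$ (equivalently obtained from $-\E\,\ell''(\theta)$). Forming Rao's statistic $R(\theta_0)=U(\theta_0)^2/I(\theta_0)$, the factor $c^2$ cancels and one gets $R(\theta_0)=(X-N\theta_0^*)^2/\{N\theta_0^*(1-\theta_0^*)\}$, where $\theta_0^*=\theta_0 p_++(1-\theta_0)(1-p_-)$. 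Under $H_0:\theta=\theta_0$ the statistic $R(\theta_0)$ converges in distribution to a chi-square random variable with one degree of freedom, so the asymptotic level-$\alpha$ test accepts exactly when $R(\theta_0)\le\chi^2_\alpha(1)$; solving the quadratic inequality $(X-N\theta_0^*)^2\le N\chi^2_\alpha(1)\,\theta_0^*(1-\theta_0^*)$ for $X$ gives precisely the interval $[l_{\theta_0},u_{\theta_0}]$ in the statement.

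For part~(b), the duality theorem gives $S(X)=\{\theta_0\in[0,1]:l_{\theta_0}\le X\le u_{\theta_0}\}$. Suppose $X<\inf_{\theta_0\in[0,1]}l_{\theta_0}$. Then for every $\theta_0\in[0,1]$ we have $X<l_{\theta_0}$, hence $X\notin[l_{\theta_0},u_{\theta_0}]$ and $\theta_0\notin S(X)$; since $\theta_0$ was arbitrary, $S(X)=\emptyset$. The case $X>\sup_{\theta_0\in[0,1]}u_{\theta_0}$ is symmetric, using $X>u_{\theta_0}$ for all $\theta_0$.

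I do not expect a genuine obstacle here. The only places needing care are the bookkeeping in the score computation --- in particular that the Jacobian $c$ of the reparametrization is constant, which is what makes $c^2$ cancel cleanly between $U(\theta_0)^2$ and $I(\theta_0)$ --- and the observation that $\theta_0^*(1-\theta_0^*)>0$ throughout $\theta_0\in[0,1]$ (which holds whenever $\theta_0^*\in(0,1)$, e.g. when $0<1-p_-$ and $p_+<1$), so that $[l_{\theta_0},u_{\theta_0}]$ is a well-defined nondegenerate interval. The substantive point that condition~\eqref{formula:freq} is actually met by the survey counts in Table~\ref{tbl:survey} is a separate numerical verification and does not enter the proof of the theorem itself.
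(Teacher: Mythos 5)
Your proposal is correct and follows essentially the same route as the paper: both compute Rao's statistic for the binomial model via the reparametrization $\theta\mapsto\theta^*$, obtain $(X-N\theta_0^*)^2/\{N\theta_0^*(1-\theta_0^*)\}$ after the constant Jacobian cancels (a step the paper states as an equality of $\theta$- and $\theta^*$-derivative forms and you make explicit via the chain rule), invert the quadratic inequality to get $[l_{\theta_0},u_{\theta_0}]$, and derive (b) directly from the duality characterization of $S(X)$. No gaps; your added remarks about $c$ being constant and $\theta_0^*(1-\theta_0^*)>0$ only make explicit what the paper leaves implicit.
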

	\begin{proof} (a)
	Let $\theta^* = \theta p_+ + (1-\theta)(1-p_-)$ and
	$$L(\theta)= L(\theta;N,X) = \binom{N}{X} \theta^X (1-\theta)^{N-X}.$$
	The score statistics is 
	\bea
	\Big(\frac{d\log L(\theta)}{d\theta} \Big)_{\theta=\theta_0}^2 \Big[  E\Big(-\frac{d^2\log L(\theta)}{d\theta^2}  \Big)_{\theta=\theta_0}\Big]^{-1}
	&=& \Big(\frac{d\log L(\theta)}{d\theta^*} \Big)_{\theta=\theta_0}^2 \Big[  E\Big(-\frac{d^2\log L(\theta)}{d(\theta^*)^2}  \Big)_{\theta=\theta_0}\Big]^{-1}\\
	&=&\frac{(X-N\theta_0^*)^2}{N\theta_0^*} + \frac{(N-X-N(1-\theta_0^*))^2}{N(1-\theta_0^*)}\\
	&=& \frac{(X-N\theta_0^*)^2}{N\theta_0^*(1-\theta_0^*)}.
	\eea
	Then,  $100(1-\alpha)\%$ acceptance interval is 
	\bea
	A(\theta_0) &=& \Big\{ X : \frac{(X-N\theta_0^*)^2}{N \theta_0^*(1-\theta_0^*)} \le\chi^2_{\alpha}(1)  \Big\}\\
	&=& \{ X : |X-N\theta_0^*| \le  \{\chi^2_{\alpha}(1)N p_0^*(1-\theta_0^*)\}^{1/2} \},
	\eea
	which proves (a). 
	
\noindent 	(b) If
	\bea
	X<\inf_{\theta_0\in[0,1]}l_{\theta_0} \text{ or } X>\sup_{\theta_0\in[0,1]}u_{\theta_0},
	\eea
	then $X \notin A(\theta_0)$ for all $\theta_0\in[0,1]$. It implies the confidence interval of $X$ is the empty set. This completes the proof. $\blacksquare$ 
	\end{proof}
	
The intuitive reason for the empty confidence interval is as follows. 
The set of sampling distributions for $X$ is 
\bea
\left\{Binom(N,\theta): (1-p_- ) \le \theta \le p_+   \right\}.
\eea
 When $X/N$ is  smaller (larger) than  $1-p_-$ ($p_+$), the probability that $X$ is observed is small for every sampling distribution in the set. This makes test decisions rejected for every null hypothesis. Thus, the extreme $X$ implies $p_-$ and $p_+$ are doubtful.

    For the three rounds of surveys given in Table \ref{tbl:survey}, we show all the maximum likelihood estimators are zero and the confidence intervals are the empty set. We assume the fixed $(p_+,p_-)$ to be $(42/45,34/35)$, which is calculated from the clinical evaluation data (Table \ref{table:frank}) and formula $(r_{++}/r_{\cdot +}, r_{--}/r_{\cdot -})$ according to the notation in Table \ref{table:tvstrue}. 
    Based on equation \eqref{eq:mle}, all the maximum likelihood estimatiors are zero, since values of $N(1-p_-)$ are $42.9$, $41.1$ and $39.4$ which are all larger than the observed $X$s. 
    The confidence intervals are the empty set since values of $\inf_{\theta\in[0,1]} l_{\theta}$ are $30.2$, $28.8$ and $27.3$ which satisfy condition \eqref{formula:freq} in Theorem \ref{theorem:raotest}.

\se{A Bayesian method with informative prior distributions}\label{sec:Bayesian}

We propose a Bayesian method that avoids the empty confidence set problem. 
For the Bayesian analysis of model \eqref{model:gen1}, we assign prior distributions on $\theta$, $p_+$ and $p_-$. 
According to KNHNES design, the parameter $\theta$ refers to the seroprevalence in the population that includes those who have been confirmed to be tested positive for COVID-19 by the government. Thus, it is reasonable to assume $\theta$ is larger than the proportion of the confirmed cases, and we choose the following constrained prior distribution on parameter $\theta$:
\bean
\pi(\theta) \propto   (\theta)^{-1/2} (1-\theta)^{-1/2}  I(\theta>\tilde{\theta}),\label{eq:prior}
\eean
where $\pi(\theta)$ is the density function of the prior distribution on $\theta$, and  $\tilde{\theta}$ is the total number of confirmed cases divided by the number of the population. Note that the constrained prior distribution \eqref{eq:prior} is constructed by constraining Jefferey's prior or reference prior distribution for binomial parameter \citep{yang1996catalog}.

To construct prior distributions on $p_+$ and $p_-$, we use the posterior distribution on the sensitivity and specificity obtained from a clinical evaluation of the serology test.
In the clinical evaluation, we consider that the serology test is applied to samples of which the true states are known. The true state of a sample refers to whether the sample has the antibodies in reality. The data from the clinical evaluation is then represented as Table \ref{table:tvstrue}.
\begin{table}[!htbp]
	\centering
	\caption{Data format for clinical evaluation when the true states of samples are known.\label{table:tvstrue}}
	\begin{tabular}{|c|c|c|c|c|}
		\hline
		\multicolumn{2}{|c|}{\multirow{2}{*}{}}   & \multicolumn{3}{c|}{True state}          \\ \cline{3-5} 
		\multicolumn{2}{|c|}{}                    & Positive     & Negative     & Total            \\ \hline
		\multirow{3}{*}{
			Test result} & Positive & $r_{++}$     & $r_{+-}$     & $r_{+\cdot}$     \\ \cline{2-5} 
		& Negative & $r_{-+}$     & $r_{--}$     & $r_{+\cdot}$     \\ \cline{2-5} 
		& Total    & $r_{\cdot+}$ & $r_{\cdot-}$ & $r_{\cdot\cdot}$ \\ \hline
	\end{tabular}
	
\end{table}

For the analysis of the clinical evaluation (Table \ref{table:tvstrue}), we specify a statistical model using the binomial distribution as 
\bea
r_{++} &\sim& Binom(r_{\cdot +}, p_+)\\
r_{--} &\sim& Binom(r_{\cdot -}, p_-).
\eea
By applying the Jefferey's prior (or reference prior) to the binomial parameters $p_+$ and $p_-$, we obtain the densitiy functions of posterior distributions, $\pi^*(p_+ \mid r_{++}, r_{\cdot +} )$ and $\pi^*(p_- \mid r_{--}, r_{\cdot -} )$, as 
\bean\label{model:gen2}
\pi^*(p_+ \mid r_{++}, r_{\cdot +} ) &\propto&  p^{Binom}(r_{++} \mid r_{\cdot +},p_+) (p_+)^{1/2} (1-p_+)^{1/2}\nonumber\\
\pi^*(p_- \mid r_{--}, r_{\cdot -} ) &\propto&  p^{Binom}(r_{--} \mid r_{\cdot -},p_-) (p_-)^{1/2} (1-p_-)^{1/2},
\eean
where $p^{Binom}(\cdot \mid n,p)$ is the densitiy function of the binomial distribution $Binom(n,p)$ for $n\in\bbN$ and $p\in[0,1]$. 
Note that the Jefferey's prior is a conjugate prior for the likelihood function $p^{Binom}(\cdot \mid n,p)$. Thus, the density function of the posterior distributions are calculated as 
\bea
\pi^*(p_+ \mid r_{++}, r_{\cdot +} ) &\propto&  (p_+)^{(r_{++}+1/2)} (1-p_+)^{( r_{\cdot +} -r_{++}+ 1/2)}\\
\pi^*(p_+ \mid r_{--}, r_{\cdot -} ) &\propto& (p_-)^{( r_{--}+1/2)} (1-p_-)^{(r_{\cdot -}- r_{--}+1/2)}.
\eea
Finally, we use the posterior distributions to construct the informative prior distributions on $p_+,p_-$ of model $\eqref{model:gen1}$. That is, we set 
\bea
\pi(p_+)  &\propto&  (p_+)^{(r_{++}+1/2)} (1-p_+)^{( r_{\cdot +} -r_{++}+ 1/2)}\\
\pi(p_-)  &\propto& (p_-)^{( r_{--}+1/2)} (1-p_-)^{(r_{\cdot -}- r_{--}+1/2)},
\eea
where $\pi(p_+)$ and $\pi(p_-)$ are the densitiy functions of the informative prior distributions.  

    We analyze the survey data (Table \ref{tbl:survey}) using the proposed Bayesian method. 
    Let $\theta_1$, $\theta_2$ and $\theta_3$ be the seroprevalance parameters for each survey.
    We assign the constrained prior distributions on $\theta_1$, $\theta_2$ and $\theta_3$ as equation \eqref{eq:prior}.
    When calculating the percentage of the confirmed cases, we use the cumulative confirmed cases at the last dates in the collection periods of the sets of samples, and let them denoted by $\tilde{\theta}_1$, $\tilde{\theta}_2$ and $\tilde{\theta}_3$.
    We construct informative prior distributions on $p_+$ and $p_-$ using the clinical evaluation of the PRNT for SARS-CoV-2 performed by \cite{kohmer2020brief}.
    By applying the clinical evaluation data (Table \ref{table:frank}) to equation \eqref{model:gen2}, we obtain the informative prior distributions as 
    \bea
    p_+ &\sim& Beta(42.5,3.5)\\
	p_- &\sim& Beta(34.5,1.5),
    \eea
    where $Beta(\alpha,\beta)$ denotes the beta distribution with the density function of $$f(x) = \frac{x^{\alpha-1}(1-x)^{\beta-1}}  {\int_0^1 x^{\alpha-1}(1-x)^{\beta-1} dx}.$$
    Collecting the prior distributions and three rounds of seroprevalance survey results, we construct the generative model as 
	\bea
	X_1 \mid \theta_1, p_+, p_-  &\sim& Binom(N_1, \theta_1 p_+ + (1-\theta_1)(1-p_-))\\
	X_2  \mid \theta_2, p_+, p_- &\sim& Binom(N_2, \theta_2 p_+ + (1-\theta_2)(1-p_-))\\
	X_3 \mid \theta_3, p_+, p_-  &\sim& Binom(N_3, \theta_3 p_+ + (1-\theta_3)(1-p_-))\\
	p_+ &\sim& Beta(42.5,3.5)\\
	p_- &\sim& Beta(34.5,1.5)\\
	\pi(\theta_1) &\propto& (\theta_1)^{-1/2}(1-\theta_1)^{-1/2}  I(\theta_1 \ge \tilde{\theta}_1 )\\
	\pi(\theta_2) &\propto& (\theta_2)^{-1/2}(1-\theta_2)^{-1/2}I(\theta_2 \ge \tilde{\theta}_2 )\\
	\pi(\theta_3) &\propto& (\theta_3)^{-1/2}(1-\theta_3)^{-1/2}I(\theta_3 \ge \tilde{\theta}_3 ),
	\eea
	where $(N_i,X_i)$ is the pair of the number of samples and the number of test-positive samples of $i$th seroprevalance survey for $i\in \{1,2,3\}$. 
	
	For inference, we generate posterior samples using Markov chain Monte Carlo (MCMC) sampling method. Specifically, we generate 4,000 posterior samples through running 4 Markov chains with different initial values, where each chain has 1,000 samples after a burn-in period of 1,000 samples.
	We implement the MCMC algorithm with Stan \citep{carpenter2017stan}. 
	We extract the posterior samples of $\theta_1$, $\theta_2$ and $\theta_3$, and multiply the number of the population in 2020, $51,829,023$ \citep{MOIS}, to the parameters.
	We then give the summary statistics of the multiplied posterior samples in Table \ref{table:summarypos}.
		\begin{table}[!htbp]
		\centering
		\caption{Summary statistics of posterior distributions of the population who has antibodies against SARS-CoV-2 for the three rounds of the seroprevalance surveys. 
			The date column represents the last dates of the collection period of each survey. The column of cumulative confirmed cases represents the cumulative numbers of confirmed cases on the corresponding dates. 
			\label{table:summarypos} }
		\begin{tabular}{|c|c|c|c|}
			\hline
			Date     &\makecell{Cumulative \\  confirmed cases }    & Posterior mean & The 95\% credible interval   \\ \hline
			16th of June & $12198$ & $38380.0$       & $[12544.8, 122919.7]$      \\ \hline
			13th of July &  $14873$ & $58742.9$        & $[16126.5, 175693.4]$      \\ \hline
			31th of October & $26635$  & $119979.7$       & $[31159.9, 307448.1]$       \\ \hline
		\end{tabular}
		
	\end{table}
	According to Table \ref{table:summarypos}, the ratio of the posterior mean to the confirmed cases ranges from $3.1$ to $4.5$, which represents the proportion of $$\frac{[\text{The total number of the infected}] }{ [\text{The total number of the detected}]}.$$

	Finally, we compare the result of the proposed Bayesian method with the cumulative number of confirmed cases and the result of statistical analysis under the accuracy assumption. 
    Under the accuracy assumption, we consider the statistical model 
    \bea
    X \sim  Binom\left(N,\theta\right),
    \eea
    instead of model \eqref{model:gen1}. We use $X/N$ as a point estimator for $\theta$, and we construct a confidence interval of $\theta$ by \cite{clopper1934use}. 
    As in the proposed Bayesian method, we multipy the number of the population to the point estimator and the confidence interval.
    The comparison is then represented in Figure \ref{fig:sero}.
     \begin{figure}[!htbp]
   	\centering
   	\caption{
   		Dots and error bars denoted by ``Bayesian method" represent the posterior mean and $95\%$ credible intervals of the multiplied posterior distributions on seroprevalance parameters by the proposed Bayesian method.
   		Dots and error bars denoted by ``Accuracy assumption" represent the multiplied point estimators and the multiplied $95\%$ confidence intervals of the results of statistical analysis under the accuracy assumption. 
   		The line graph denoted by ``Confirmed" represents the daily cumulative confirmed cases.}
   	\includegraphics[height=10.5cm,width=15cm]{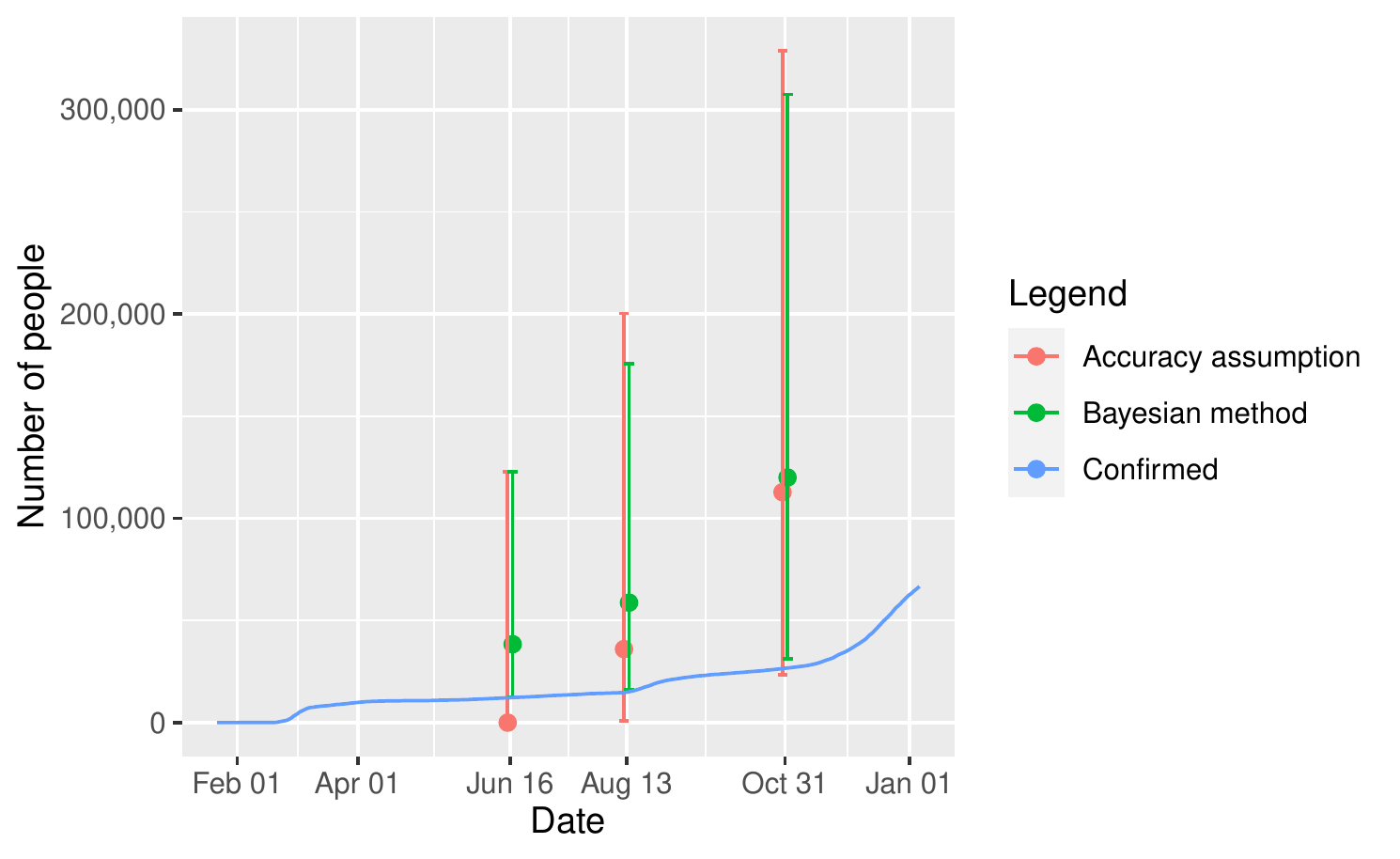}
   	
   	\label{fig:sero}
   \end{figure}
   Figure \ref{fig:sero} shows that the lower bounds of interval estimation by the Bayesian method are larger than the number of confirmed cases as expected, but the other does not satisfy the inequality condition.
   Each upper bound of the interval estimations by the Bayesian method is smaller than the corresponding one obtained under the accuracy assumption.
   Under the inaccuracy assumption, the Bayesian method considers that test-positive cases may include false-negative cases, which is critical when the test-positive number is small enough. Thus, the Bayesian method makes the upper bounds shrink.	
	
	\se{Discussion}
	In this article, we have proposed a Bayesian method with informative prior, which uses the clinical evaluation results of the plaque reduction neutralization test for analyzing data on the seroprevalence surveys in South Korea. We have compared the method with the frequentist's method under the inaccuracy assumption and the statistical analysis under the accuracy assumption. The main advantages of the proposed method are two. First, the method allows the constrained parameter space, which has an obvious lower bound as the proportion of the cumulative confirmed cases. Second, when we consider the inaccuracy assumption, the method can provide a practically corrected estimate contrary to the frequentist's method.

However, this study has a limitation. Each set of samples in the seroprevalence survey does not cover all the regions in South Korea.  
In the first survey announced on the 9th of July, the survey samples do not include those from the populations of several major cities such as Daegu, Daejeon, and Sejong. Daegu particularly was the city of the first mass outbreak in South Korea. The other surveys also do not cover all the cities. The second survey samples do not include those from Ulsan, Busan, Jeonnam, and Jeju, and for the third survey, Gwangju and Jeju are not covered.

%
%

\section*{Acknowledgements}
Seongil Jo was supported by INHA UNIVERSITY Research Grant, and  Jaeyong Lee was supported by the National Research Foundation of Korea (NRF) grant funded by the Korea government(MSIT) (No. 2018R1A2A3074973)

	\bibliographystyle{dcu}
	\bibliography{covid}
	
\end{document}